\newtheorem{problem}{\textbf{Problem}}
\newtheorem{theorem}{\textbf{Theorem}}
\newcounter{lemtheorem}
\DeclareMathOperator*{\argmin}{\text{arg min}\,}
\newcommand{\ostar}{\mathbin{\mathpalette\make@circled\star}}
\newcommand{\make@circled}[2]{%
  \ooalign{$\m@th#1\smallbigcirc{#1}$\cr\hidewidth$\m@th#1#2$\hidewidth\cr}%
}
\newcommand{\smallbigcirc}[1]{%
  \vcenter{\hbox{\scalebox{0.77778}{$\m@th#1\bigcirc$}}}%
}
\algrenewcommand\algorithmicprocedure{}
\title
{Iterative Recommendations based on Monte Carlo Sampling and Trust Estimation in Multi-Stage Vehicular Traffic Routing Games} 
\author{\IEEEauthorblockN{Doris E. M. Brown, Venkata Sriram Siddhardh Nadendla, and Sajal K. Das}
\vspace{0.5ex}
\IEEEauthorblockA{
Department of Computer Science 
\\ 
Missouri University of Science and Technology 
\\ 
Rolla, MO 65409, USA 
\\ Email: \{deby3f, nadendla, sdas\}@mst.edu}
\vspace{-4ex}
}
\begin{document}

\maketitle

\begin{abstract}
The shortest-time route recommendations offered by modern navigation systems fuel selfish routing in urban vehicular traffic networks and are therefore one of the main reasons for the growth of congestion. In contrast, intelligent transportation systems (ITS) prefer to steer driver-vehicle systems (DVS) toward system-optimal route recommendations, which are primarily designed to mitigate network congestion. However, due to the misalignment in motives, drivers exhibit a lack of trust in the ITS. This paper models the interaction between a DVS and an ITS as a novel, multi-stage routing game where the DVS exhibits dynamics in its trust towards the recommendations of ITS based on counterfactual and observed game outcomes. Specifically, DVS and ITS are respectively modeled as a travel-time minimizer and network congestion minimizer, each having nonidentical prior beliefs about the network state. A novel approximate algorithm to compute the Bayesian Nash equilibrium, called \emph{ROSTER} (Recommendation Outcome Sampling with Trust Estimation and Re-evaluation), is proposed based on Monte Carlo sampling with trust belief updating to determine the best response route recommendations of the ITS at each stage of the game. Simulation results demonstrate that the trust prediction error in the proposed algorithm converges to zero with a growing number of multi-stage DVS-ITS interactions and is effectively able to both mitigate congestion and reduce driver travel times when compared to alternative route recommendation strategies.
\end{abstract}

\begin{IEEEkeywords}
Intelligent Transportation System, Route Recommendation, Trust, Multistage Routing Game, Stackelberg Game, Bayesian Nash Equilibrium
\end{IEEEkeywords}

\section{Introduction}
The continual escalation of global traffic congestion leads to the pressing need for efficient traffic routing approaches \cite{schrank20152015,pishue20212021}. The ever-growing urbanization rate \cite{seidel2020rush} and inefficiency in selfish routing \cite{roughgarden2001stackelberg} are two major reasons behind the rapid increase in traffic congestion, particularly in the last decade. This paper tries to tackle the problem of congestion reduction by mitigating network inefficiencies due to selfish routing, where agents seek to maximize their individual utilities without any consideration for optimality in terms of any social welfare metric. Initially, navigation systems were thought to be a solution to mitigate traffic congestion, as users can find and choose low-congestion routes. However, most navigation systems (e.g. Google Maps and Waze) typically recommend shortest travel-time routes \cite{mehta2019google,laor2022waze}, which further boost selfish routing as drivers minimize individual travel times without considering their impact on overall road network congestion. 

With the rise of connected and autonomous vehicles (CAVs), intelligent transportation systems (ITS) have been proposed to develop advanced driver assist systems (ADAS), which provide additional safety and driving support for vehicles. If an ITS infrastructure is equipped with computational power, in addition to just connected sensors, the ITS is adept at presenting route recommendations that are strategically designed to mitigate network congestion \cite{cheng2020mitigating}. Despite this potential of ITS to recommend {\em welfare-optimizing routes} that mitigate a network metric such as congestion, carbon emissions, or safety, modern vehicles are not fully autonomous and still require a human driver, primarily to operate the vehicle, who may not be fully compliant with ITS recommendations due to distrust. Therefore, it is natural to consider an ITS that iteratively recommends a driver-vehicle system (DVS) to steer it toward welfare-optimizing routes.

An ITS 
typically employs route recommendations to influence the behavior of DVS. Although drivers use GPS-based navigation systems to receive route recommendations, many of them have reported not trusting recommendations from such systems \cite{yamsaengsung2018towards}, \cite{amin2018evaluating}, \cite{trapsilawati2019human}. User distrust in navigation systems originates from the perceived unreliability of applications to provide route recommendations that meet user expectations in terms of travel time or traversal path \cite{yamsaengsung2018towards}. Although modern navigation systems aim to leverage users' selfish tendencies, yet they still generate distrust among a significant portion of users. On the other hand, if they aim to optimize social welfare, potentially increasing individual travel times, the ITS 
would likely face more significant levels of DVS distrust due to this misalignment in motives. 
Therefore, ITS route recommendation algorithms must consider DVS trust to recommend routes that are appealing to both the ITS goal of mitigating network congestion and the DVS goal of minimizing travel time.

{\em Routing games}, in which selfish drivers are routed through a network along congestion-aware paths, are typically modeled as Stackelberg games, where a routing strategy is found through an approximate algorithm to determine the best response route recommendation of the ITS \cite{krichene2017stackelberg}, \cite{roughgarden2001stackelberg}, \cite{bonifaci2010stackelberg}. These approaches, while successful in reducing network congestion, typically fail to consider a spectrum of trust at the DVS. Hence, they are unable to propose alternative route recommendations if the DVS rejects a recommendation, thereby offering no intermediate solution between the initial recommendation of the ITS and the selfish route of the DVS. On the other hand, {\em multi-stage games} offer a promising approach to modeling this interaction, since self-interested ITS and DVS can cooperate with each other to some extent to reach a solution that is beneficial for both players. This motivates our work.

\subsection{Contributions} 
In this paper, the interaction between ITS and DVS, hereafter referred to as the system and vehicle, respectively, is modeled as a multi-stage game where the system and driver are network congestion minimizer and travel-time minimizer, respectively, and the driver has dynamic trust unknown to the system. 
To the best of our knowledge, this is the first work in this direction.
Our novel contributions are as follows.
\begin{enumerate}
    \item We model a \textbf{\emph{multi-stage Stackelberg game}} between a driver and a system in which their motives are misaligned and the driver exhibits trust dynamics towards the system. 
    \item We develop an \textbf{\emph{approximate algorithm}}, called ROSTER (Recommendation Outcome Sampling with Trust Estimation and Re-evalutation), to approximate the Bayesian Nash equilibrium. It is based on Monte Carlo sampling with trust belief updating to determine the best response route recommendations at the system at each game stage.
    \item We present \textbf{\emph{theoretical results}} on the worst-case and best-case costs incurred by the ROSTER-based system.
    \item We validate the ROSTER algorithm with \textbf{\emph{diverse, realistic simulation experiments}} using traffic network data from Manhattan and Sioux Falls (USA) and observed that ROSTER performs better than alternative route recommendation and route selection strategies in mitigating both network congestion and the driver's travel time. 
\end{enumerate}

The paper is organized as follows. Section \ref{Sect: Related Work} reviews the related work. Section \ref{Sec: Problem Formulation} formulates the system-driver interaction as a routing game and introduces the Bayesian Nash equilibrium. Section \ref{Sec: Proposed Methodology} presents the ROSTER algorithm and best response approximations. Section \ref{Sect: Experimental Evaluation} covers performance metrics and simulation results. Finally, Section \ref{Sect: Conclusion} summarizes the work and suggests future research directions.

\section{Related Work} 
\label{Sect: Related Work}
Several existing approaches have been proposed to mitigate congestion in traffic networks. They can broadly be categorized under marginal cost pricing, information revelation, and Stackelberg routing.

In {\bf marginal cost pricing} approaches, drivers are charged a toll or tax based on the marginal additional cost they impose on the network travel time. While marginal cost pricing has long been known to mitigate the impact of selfish routing on social welfare in traffic networks \cite{vickrey1963pricing}, such approaches typically assume that the drivers' routing decisions are static which the system knows in order to deploy accurate tolling. To combat these assumptions, efforts have been made to develop marginal cost routing techniques considering drivers' dynamic routing decisions \cite{sharon2017real}. While such approaches are theoretically sound, drivers and some governments express an aversion to marginal cost pricing \cite{harrington2001overcoming}, making it infeasible to implement in practice in democratic societies.

The {\bf information revelation} approaches reveals the network state information to the drivers, publicly or privately. Although some of these approaches have shown that congestion is mitigated under certain conditions \cite{acemoglu2018informational}, other studies have noted that making additional information available to drivers can lead to increased congestion under certain conditions \cite{wu2017informational}, \cite{wu2021value}. Furthermore, providing information regarding the state of the entire network can lead to cognitive overload, leading the driver to make sub-optimal routing decisions given their inability to process all information shared with them. 

In \textbf{Stackelberg routing game} approaches, while some drivers are centrally routed strategically, others selfishly choose their routes. These games mitigate network congestion by guiding traffic toward a system-optimal solution under trust-based stochastic \cite{Brown2024TASR} or deterministic \cite{roughgarden2001stackelberg, bonifaci2010stackelberg} driver compliance. Existing approaches assume that the system knows each driver's compliance or trust probability and prior route preferences. However, in practice, a system lacks complete knowledge of driver trust because modern navigation systems provide multiple route recommendations rather than a single one that is accepted or rejected. Prior Stackelberg solutions either fail to consider driver trust when recommending routes \cite{roughgarden2001stackelberg, bonifaci2010stackelberg}, or assume driver trust is known to the system \cite{Brown2024TASR}.

\vspace{2pt}
{\em Given drivers' aversion to marginal cost pricing, potential overload from information-based routing, and existing familiarity with route recommendations, our proposed work in this paper develops a novel multi-stage Stackelberg game framework in which a traffic system strategically proposes multiple sequential route recommendations to mitigate network congestion while appealing to drivers' selfish interests.}

\section{Problem Formulation \label{Sec: Problem Formulation}}
This section introduces the traffic network scenario, multi-stage Stackelberg game that models the interaction between the intelligent transportation system (ITS) and driver-vehicle system (DVS) within the network, and
Bayesian Nash equilibrium representing the leader's optimal route recommendation given the follower's strategic response.

\subsection{Traffic Network}
Consider a transportation network as a graph $\mathcal{G} = \{\mathcal{V}, \mathcal{E}\}$, where the vertex-set $\mathcal{V}$ represents a set of physical locations in the network, and the edge-set $\mathcal{E} = \{e_1, \cdots, e_K\}$ represents a set of $K$ roads between pairs of locations in $\mathcal{V}$. For each origin-destination pair, denoted by $(o,d)$, a finite set $\mathcal{R} = \{r_1, \cdots, r_N\}$ of $N$ edge-disjoint simple paths (routes) exists between the origin and the destination. Let a given route $r_i \in \mathcal{R}$ consist of a set of edges $\mathcal{E}_{r_i} \in \mathcal{E}$. Then the travel time of $r_i$ is computed as:
\begin{equation*}
\label{Eqn: Route Travel Time}
T(r_i) = \sum_{e_j \in \mathcal{E}_{r_i}} t(e_j, f_{e_j}),
\end{equation*}
and the network congestion is computed as: $\sum_{r_i \in \mathcal{R}} T(r_i)$,
where $f_{e_j}$ denotes the flow of traffic on an edge $e_j$, and $t(e_j, f_{e_j})$ represents the travel time of an edge $e_j$ with the flow $f_{e_j}$. Now
$t(e_j, f_{e_j})$ 
is computed as follows using the Bureau of Public Roads (BPR) \cite{manual1964bureau} function:
\begin{equation*}
\label{Eqn: Travel Time BPR}
t_{e_j}(f_{e_j}) = t^{ff}_{e_j} \left [ 1 + \lambda \left (\frac{f_{e_j}}{c_{e_j}} \right)^\beta \right ],
\end{equation*}
where $t^{ff}_{e_i}$ denotes the free-flow travel time of $e_i$; $c_{e_i}$ is the capacity of $e_i$; and $\lambda$ and $\beta$ are constants commonly assumed to be $0.15$ and $4$, respectively. The state of the network is defined as $\boldsymbol{\pi} = \{\pi(f_{e_j})\}_{e_j \in \mathcal{E}}$, which represents the true probability distribution of traffic volume on each edge. 

\subsection{Players and Game Progression}
Assume a driver aims to traverse the network via a shortest-travel-time $(o,d)$-route, and 
the driver has a prior belief $\boldsymbol{q} = {q(f_{e_j})}_{e_j \in \mathcal{E}}$ about traffic volume along an edge $e_j$ from prior experience. Here, $q(f_{e_j}) = \mathcal{P}(f_{e_j} = f)$ denotes the driver's probabilistic belief about the flow on an edge $e_j$. 
The driver computes the expected travel time of an edge as follows:
\begin{equation}
\mathbb{E}_q(t_{e_j}) = t_{e_j}(\mathbb{E}_q(f_{e_j})),
\label{Eqn: Expected Edge TT}
\end{equation}
where the expected travel time of a route is given by
\begin{equation}
\mathbb{E}_q(T(r_i)) = \displaystyle \sum_{e_j \in \mathcal{E}_{r_i}} \mathbb{E}_q(t_{e_j}).
\label{Eqn: Expected Route TT}
\end{equation}

Assume there exists a traffic system capable of providing route recommendations to drivers with the goal of minimizing congestion within the network. Like the driver, the system also has a prior belief $\boldsymbol{p} = {p(f_{e_j})}_{e_j \in \mathcal{E}}$ about the traffic volume along each edge, with the help of sensing infrastructure that monitors traffic. The system computes the expected travel time of an edge $e_j$ as $\mathbb{E}_p(t_{e_j})$ and the expected travel time of a route $r_i$ as $\mathbb{E}_p(T(r_i))$, following Equations (\ref{Eqn: Expected Edge TT}) and (\ref{Eqn: Expected Route TT}), respectively. Then the system evaluate the congestion of the network as
\begin{equation*}
\psi_S(\mathcal{G}) = \sum_{e_j \in \mathcal{E}} t_{e_j}(f_{e_j}).
\label{Eqn: system Network Congestion}
\end{equation*}

Assume the driver starts with an initial trust score $\alpha_0 \in (0,1]$ for the system at $t = 0$. Since a driver with $\alpha_0 = 0$ would not seek recommendations, we assume $\alpha_0 \neq 0$. Assume at time step $t = 0$, the driver requests a route recommendation from the system, initiating the multi-stage game. Let $t \in \{0, \dots, t^*\}$ represent the time step and $m \in \{0, \dots, m^*\}$ denote the stage of an interaction, where an interaction consists of two time steps. A driver may engage in multiple stages of interaction before ultimately choosing a route without the system's input. To reflect this, assume a maximum of $m^* \in \{1, 2, 3\}$ stages, where $m^* = 3$ by default unless the driver accepts a recommendation earlier.

At each stage of the interaction, the system, denoted by the subscript $S$, constructs a recommendation $a_S^{t} = (r_i, \mathbb{E}_p(T(r_i)))$ which is communicated to the driver. 
Upon receiving the recommendation, the driver, denoted by the subscript $D$, evaluates its route options, and takes an action $a_D^{t + 1} \in \mathcal{R}$. The driver is assumed to \emph{accept} or \emph{reject} the recommendation if $a_D^{t+1} = r_i$ or $a_D^{t+1} \not = r_i$, respectively. Each $(a_S^t, a_D^{t+1})$ pair constitutes a stage of the interaction, and after each stage concludes, the set of available routes is assumed to be $\mathcal{R}_{m+1} = \mathcal{R} - r_i$. At any time $t$ in the interaction, history of the interaction $h^t = (\cdots, a_S^t \text{ or } a_D^t)$ is a sequence of actions taken by the system and driver.

\begin{figure}
\centering
\includegraphics[width=0.47\textwidth]{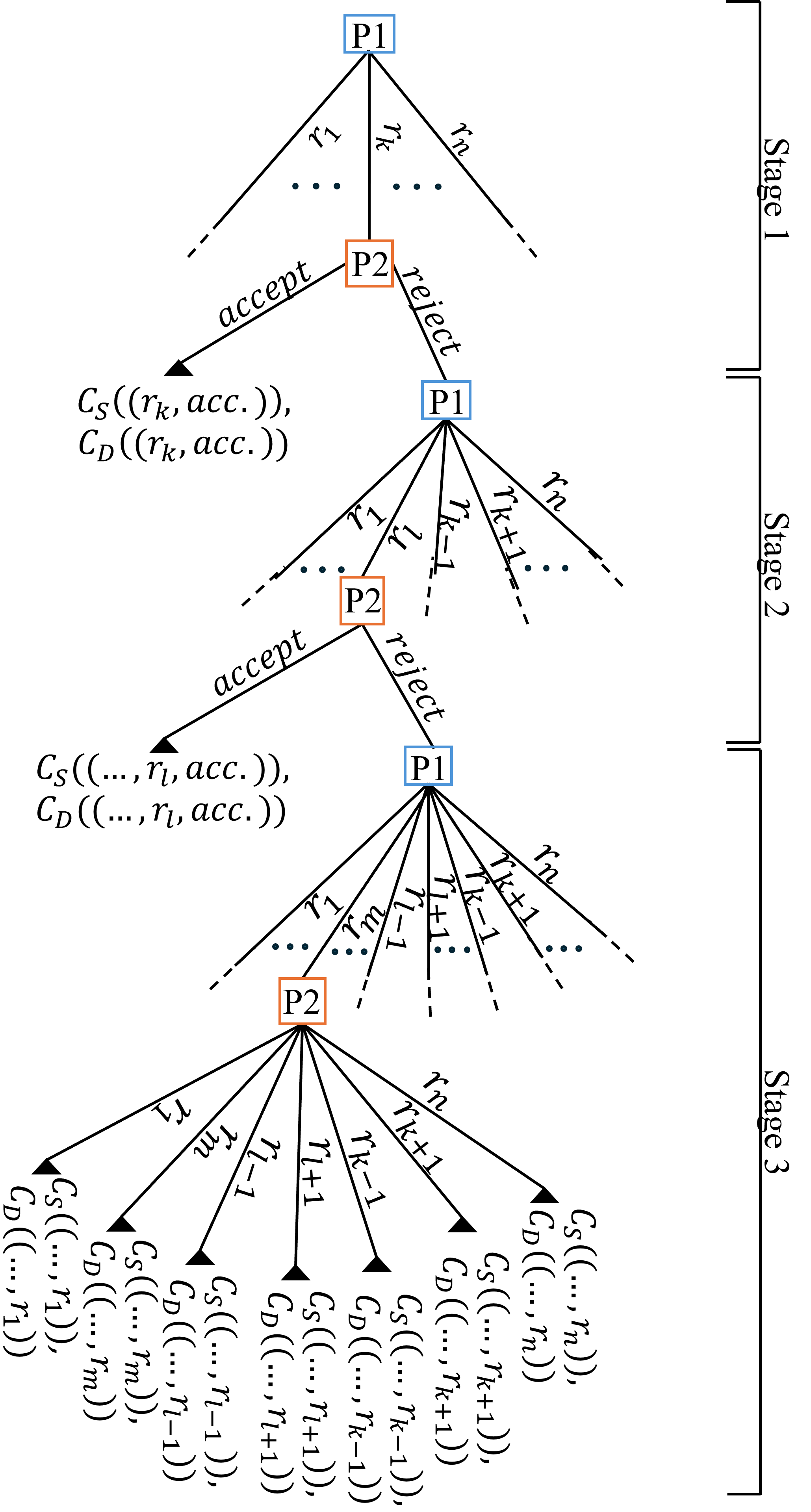}
\vspace{-0.1in}
\caption{Three-Stage Interaction between the system (P1) and driver (P2)}
\label{Fig: Negotiation Game 3 Stage}
\vspace{-0.2in}
\end{figure}

At stage $m^*$, the interaction concludes, and the driver incurs a cost of
\begin{equation}
C_D(h^{t^*}) = \gamma_D^{m^*-1} \cdot T(r_D^{t^*}),
\label{Eqn: driver Cost}
\end{equation}
where $r_D^{t = 2m}$ is the final action taken by the driver in $h^{t^*}$, 
and the constant $\gamma_D$ denotes the discomfort of the driver as more stages are needed to determine a route choice. Similarly, the system is assumed to incur a cost of 
\begin{equation}
C_S(h^{t^*}) = \gamma_S^{m^*-1} \cdot \psi_S(\mathcal{G} | r_D^{t^*}),
\label{Eqn: system Cost}
\end{equation}
where $\gamma_S$ is the system's discomfort, and  $\psi_S(\mathcal{G} | r_D^{t = 2m})$ is the network congestion when the driver traverses route $r_D^{t = 2m}$.

Figure \ref{Fig: Negotiation Game 3 Stage} illustrates the full recommendation and decision making for a three-stage interaction. At each stage, the system (player P1) sends a recommendation to the driver (player P2) who accepts or rejects. 
Assume that in subsequent stages, previously-recommended routes are unable to be recommended again. This interaction 
repeats until a recommendation has been accepted or the maximum stage has been reached. This sequence of stages constitutes a single interaction.

\subsection{Bayesian Nash Equilibrium}
Let strategies of the system $S$ and driver $D$ be denoted by 
\begin{equation*}
\sigma_S = ((a_S^1, \cdots, a_S^{{t^*}-1} | a_D^{t^*-2} \not = a_S^{{t^*-1}}))
\end{equation*} 
and ~~~~~~~
$\sigma_D = (a_D^2(a_S^1), \cdots, a_D^t(a_S^{t^*}))$


\begin{problem}
\label{Problem: Optimization Problem}
The equilibrium between the system and the driver is given by the pair $(\sigma_S^*, \sigma_D^*)$, defined as
\begin{equation*}
\begin{array}{ll}
    \sigma_S^* & = \displaystyle \argmin_{\sigma_S} C_S(h^t(\sigma_S, \sigma_D^*)),
     \\[2ex]
    \sigma_D^* & = \displaystyle \argmin_{\sigma_D} C_D(h^t(\sigma_S^*, \sigma_D)), 
\end{array}
\end{equation*}
where $h^t(\sigma_S, \sigma_D)$ denotes the history of the game until time $t$ assuming the system and driver are playing the strategies $\sigma_S$ and $\sigma_D$, respectively.
\end{problem}

Due to incomplete information at the system regarding the driver's trust, as well as practical limitations on the ability of the system to fully evaluate all routes relating to a route request, computing an optimal strategy at the outset of an interaction with the driver is infeasible. Therefore, a solution that approximates an optimal strategy $\sigma_S^*$ must be computed.

\section{Proposed Methodology \label{Sec: Proposed Methodology}}
Typically, a single-stage Stackelberg game is formulated as a bi-level optimization problem. At the upper level, the leader optimizes its objective considering the follower's best response; while at the lower level, the follower reacts to the leader's decision to form its best response. Finding a Nash equilibrium for bi-level optimization problem, and therefore a Stackelberg game, is known to be at least NP-hard \cite{sinha2017review}.
Since the system-driver interaction takes the form of a multi-stage Stackelberg game, finding a Nash equilibrium of Problem \ref{Problem: Optimization Problem} is at least NP-hard, thus necessitating an approximate solution. 

To solve Problem~\ref{Problem: Optimization Problem} the system and the driver are required to evaluate counterfactual histories across all game stages and compute counterfactual scores for each possible action. Given that the system and the driver may not have the computational power or time to compute all counterfactual game histories, a sampling-based approach offers a practical and scalable alternative for approximating equilibrium strategies in this setting. Iterative sampling-based algorithms, including those that rely on Monte Carlo sampling, are known to converge to Nash equilibrium under certain conditions \cite{papadimitriou2008computing}, and can be used to approximate a Nash equilibrium \cite{ponsen2011computing}. Hence, the approximate best responses of the system and driver are computed using a Monte Carlo sampling-based approach that evaluates a subset of counterfactual histories of the game. 

\subsection{Best Response of the Driver \label{Sect: Driver BR}}
At each interaction stage, the best response action of the driver is found using a Monte Carlo sampling approach, where the driver evaluates a subset of possible outcomes and the histories, leading to those outcomes. Here, an outcome denotes the payoff obtained by the system and driver at the end of a game, and has a corresponding history of actions taken by each player leading to that outcome. Let $\hat{\Omega}_{m}$ denote the set of all possible future outcomes if the driver chooses to reject a recommendation sent by the vehicle at stage $m$. Eliminating the outcome in which the driver accepts the recommendation at stage $m$ implies that $|\hat{\Omega}_m| = |\Omega_m| - 1$. 
The driver is assumed to sample $g_{m,D} \leq |\hat{\Omega}_{m}|$ outcomes from $\hat{\Omega}_m$ to form a set of sampled outcomes $\Bar{\Omega}_{m,D} = \{\omega_1, \cdots, \omega_{g_{m,D}}\}$. For each sampled outcome $\omega_i$, the driver uses Equation (\ref{Eqn: driver Cost}) to compute a score $C_D(\hat{h}^{\omega_i})$ given the counterfactual history leads to the outcome, $\hat{h}^{\omega_i}$. Let the sampled outcome yielding the best counterfactual cost score be given by 
\begin{equation*}
\omega_i^* = \argmin_{\omega_i \in \Bar{\Omega}_{m,D}} C_D(\hat{h}^{\omega_i}).
\label{Eqn: driver Counterfactual Best History}
\end{equation*}
Let $a_D^{t = 2m}$ be the action taken by driver at time $t$ in the outcome $\omega_i^*$ representing the driver's best counterfactual response. 

The average counterfactual cost score for rejecting the recommendation is computed as
\begin{equation*}
\label{Eqn: Driver Outcome Sampling AVerage}
\mu_{m,D} =  \frac{\sum_{\omega_i \in \Bar{\Omega}_{m,D}} C_D(\hat{h}^{\omega_i})}{g_{m,D}}.
\end{equation*} 
Considering the recommendation $(r_i, \mathbb{E}_p(T(r_i)))$ sent by the system at stage $m$, the driver computes the expected travel time of the recommended path as 
\begin{equation*}
\mathbb{E}_{q^\prime}(T(r_i)) = \alpha_m \cdot \mathbb{E}_p(T(r_i)) + (1 - \alpha_m) \cdot \mathbb{E}_{q}(T(r_i)).
\label{Eqn: driver Updated Expected TT}
\end{equation*}
We assume that the driver takes an action at time $t = 2m$ as:
\begin{equation}
\label{Eqn: Driver Decision}
a_D^t =
    \begin{cases}
      r_i & \text{if}\ \mathbb{E}_{q^\prime}(T(r_i))  \leq \mu_{m,D}  \\
    a_D^t & \text{otherwise}.
    \end{cases}
\end{equation}
Note that if $a_D^t = r_i$, the driver accepts the system's recommendation; otherwise it rejects the recommendation.

\subsection{Best Response of the System}
Compared to a single driver, a traffic system has significantly more compute power. However, in determining its best response, when the number of routes $N$ is large, 
the system may not have enough compute power to calculate all possible histories and outcomes of an interaction within a reasonable amount of time. The system may also not have perfect knowledge of $\alpha_0$, thus requiring the system to approximate $\alpha_0$ as $\hat{\alpha}_0$ and iteratively update $\hat{\alpha}_m$ at each stage. Therefore, the system is assumed to follow a Monte Carlo sampling approach to determine counterfactually-optimal route recommendations at each stage and compute $\hat{\alpha}_m$ using a no-regret-based approach.

Assume the system samples $g_{m,S} \leq |\Omega_m|$ outcomes to form a set of sampled outcomes $\Bar{\Omega}_{m,S} = \{\omega_1, \cdots, \omega_{g_{m,S}}\}$. Let $\Bar{H} = \{h^{\omega_1}, \cdots, h^{\omega_{g_{m,S}}}\}$ denote the set of histories corresponding to sampled outcomes in $\Omega_m$. For each $\omega_i \in \Bar{\Omega}_{m,S}$, the system considers its action $a_S^t$ at time $t = 2m - 1$. Let $\mathcal{R}^{\omega_i}$ denote the set of routes corresponding to actions $a_D^t$ for all $\omega_i \in \Bar{\Omega}_{m,S}$. The system then computes a counterfactual score at the driver using $\hat{\alpha}_m$ and assuming the driver samples counterfactual outcomes in which $a_S^t$ is rejected, computes a score for each of these outcomes, determines an optimal counterfactual best response $\hat{r}_D^{t+1}$ at the driver, and computes an average score $\hat{\mu}_{m,D}$ for rejecting $a_S^t$. The system computes the predicted counterfactual action of the driver at time $t + 1$ as follows: 
\begin{equation*}
\hat{a}_D^{t+1} =
\begin{cases}
    r_{a_S^t} & \text{if}\ \hat{C}_D(h^{\omega_i}|a_S^t)\leq \hat{\mu}_{m,D}
    \\[2ex]
    \hat{r}_D^{t+1} & \text{otherwise},
\end{cases}
\label{Eqn: Predicted Rec. Acceptance}
\end{equation*}
where
\begin{equation*}
\hat{C}_D(h^{\omega_i}|a_S^t) = \gamma_D^{m-1} \cdot \mathbb{E}_{\hat{q}^\prime}(\alpha_t)
\end{equation*}
and
\begin{equation*}
 \mathbb{E}_{\hat{q}^\prime}(\alpha_t) = \hat{\alpha}_m \cdot \mathbb{E}_p(a_S^t) + (1 - \hat{\alpha}_m) \cdot \mathbb{E}_q(a_S^t). 
\end{equation*}
The system computes its own counterfactual cost score, $C_S(h^{\omega_i} | \hat{a}_D^{t+1})$, for sampled outcome $\omega_i$ assuming that the driver takes action $\hat{a}_D^{t+1}$. The score is calculated similar to Equation (\ref{Eqn: system Cost}) assuming the counterfactual resulting network congestion from the driver's predicted action. 
Assume
\begin{equation*}
h^{\omega_i^*} = \argmin_{h^{\omega_i} \in \Bar{H}} C_S(h^{\omega_i} | \hat{a}_D^{t+1}).
\end{equation*}
At time $t = 2m - 1$, the system greedily takes an action $a_S^t$ at time $t$ in $h^{\omega_i^*}$, where
\begin{equation*}
a_S^t = (r_{j^\circledast}, \mathbb{E}_p(T(r_{j^\circledast}))),
\label{Eqn: system BR}
\end{equation*}
At each stage of the interaction, the system updates $\hat{\alpha}_m$, which is described in detail in Section \ref{Sect: Trust Dynamics}, and re-evaluates its approximate best-response action using Monte Carlo sampling-based approach described above. The ROSTER algorithm (\textbf{R}ecommendation \textbf{O}utcome \textbf{S}ampling with \textbf{T}rust \textbf{E}stimation and \textbf{R}e-evaluation) is described in Algorithm \ref{Alg: ROSTER}.

\textbf{An Illustrative Example:} For simplicity, consider that $\mathcal{G}$ consists of four routes, the system and driver can respectively sample three and two outcomes, $\hat{\alpha}_0 = 0.5$, $\gamma_D^0 = 1.125$, and the game is one stage. At time $m = 0$, the driver samples three histories, where the first is $(r_1, r_1)$ with outcomes in minutes of $(12, 10)$.
Assuming $\mathbb{E}_{p}(r_1) = 8$, the system computes its prediction about the driver's belief and the counterfactual score of the driver accepting $r_1$ as $\mathbb{E}_{\hat{q}^\prime}(r_1) = 0.5 \cdot 8 + 0.5 \cdot 10 = 9$ and the counterfactual score 
$\hat{C}_D(h^{\omega_1} | r_1) = 1.125 \cdot 9 = 10.125$, respectively. The driver is assumed by the system to sample rejection histories $(r_1, r_2)$ and $(r_1, r_3)$, with counterfactual driver outcomes of 11 and 9 minutes, respectively, a rejection score of $\hat{\mu}_{1,D} = 10$, and counterfactual driver best response route of $\hat{r}_D^2 = r_3$. The system assumes that the driver compares the counterfactual scores of routes $r_1$ and $r_3$, which are 10.125 and 9 respectively, and chooses $\hat{a}_D^2 = r_3$ as the best response. 
Suppose the total network congestion is $C_S(H^{\omega_1} | r_1) = 1200$ minutes if the driver takes $r_3$. If $r_1$ is recommended, the system considers that the driver will ultimately take $r_3$ for a network congestion of 1,200 minutes. This process repeats for the system's other sampled histories, and the system ultimately recommends the sampled route $r_j^\circledast$ leading to the lowest counterfactual network congestion.

\begin{algorithm}[t]
\caption{ROSTER$(R, h^m, H^{t^*})$}\label{Alg: ROSTER}
\begin{algorithmic}[1]
\State $\Omega_m = $ \Call{GetValidOutcomes}{$h^m, H^{t^*}$}
\vspace{1ex}
\State $\Bar{\Omega}_{S,m} = $ \Call{GetSamples}{$\Omega_m, g_{m,S}$}
\vspace{1ex}
\State SampledRoutes $ = [ \ ][ \ ]$
\For{each sample $\omega_i \in \Bar{\Omega}_{S,m}$}
\State $r_{a_S^{t}} = \omega_i[0]$
\State $\mathbb{E}_{\hat{q}^\prime} = $ \Call{PredictDriverUpdatedBelief}{$\hat{\alpha}_m, r_{a_S^t}$}
\State $\hat{C}_D(h^{\omega_i}|r_{a_S^t}) = (\gamma_D)^{m-1} \cdot \mathbb{E}_{\hat{q}^\prime}(\alpha_t)$
\State $\hat{\mu}_{m,D} = $ \Call{GetDriverRejectScore}{$r_{a_S^t}$}
\State $\hat{r}_D^{t+1} = $ \Call{GetDriverRejectBR}{$r_{a_S^t}$}
\State $\hat{a}_D^{t+1} = $ \Call{GetDriverAction}{$\hat{C}_D(h^{\omega_i}|r_{a_S^t}), \hat{r}_D^{t+1}$}
\State $C_S(h^{\omega_i} | \hat{a}_D^{t+1}) = $ \Call{GetITSScore}{$h^{\omega_i}, \hat{a}_D^{t+1}$}
\State SampledRoutes$[r_{a_S^t}] = C_S(h^{\omega_i} | \hat{a}_D^{t+1})$ 
\EndFor
\State $r_j^\circledast = \Call{GetBestRecommendation}{SampledRoutes}$ 
\State \textbf{return} $r_j^\circledast$
\end{algorithmic}
\end{algorithm}

\vspace{-0.1in}
\subsection{Properties of the ROSTER algorithm}

\begin{theorem}
\label{Lemma: Roster Congestion Bounds}
The network congestion resulting from the ROSTER algorithm is given by
\begin{equation*}
    \psi_S(\mathcal{G} | r_i^\circledast) \leq \psi_S(\mathcal{G} | r_i^*) \leq \sum_{e \in \mathcal{E}} t_e(f_e + 1),
\end{equation*}
where $r_i^*$ is the optimal selfish route for the driver, and $r_i^\circledast$ is the route that minimizes network congestion.
\end{theorem}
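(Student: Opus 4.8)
The plan is to prove the two inequalities in the displayed chain separately and attach each end to a realizable outcome of a ROSTER interaction: the left inequality describes the \emph{best case}, in which the driver is steered onto the congestion-minimizing route $r_i^\circledast$, while the right-hand chain describes the \emph{worst case}, in which the driver declines every recommendation and falls back on its selfish-optimal route $r_i^*$. The left inequality and the final bound are essentially a definition and a monotonicity fact, respectively; the step that carries real content is that ROSTER never leaves the network worse off than pure selfish routing, i.e.\ that the realized congestion never exceeds $\psi_S(\mathcal{G}\mid r_i^*)$.

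\emph{Left inequality.} By definition $r_i^\circledast = \argmin_{r\in\mathcal{R}}\psi_S(\mathcal{G}\mid r)$, so $\psi_S(\mathcal{G}\mid r_i^\circledast)\le\psi_S(\mathcal{G}\mid r)$ for every $r\in\mathcal{R}$, in particular for $r=r_i^*$. To tie this to the algorithm, note that at each stage ROSTER recommends the sampled route minimizing its predicted resulting congestion (Algorithm~\ref{Alg: ROSTER}), and that whatever happens the driver ultimately traverses a \emph{single} route; hence the realized congestion is bounded below by $\min_{r}\psi_S(\mathcal{G}\mid r)=\psi_S(\mathcal{G}\mid r_i^\circledast)$, with equality exactly when the interaction terminates with the driver on $r_i^\circledast$ --- the best case.

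\emph{Right-hand chain.} For the worst case I would argue that, since a recommendation can always be ignored and since at the terminal stage $m^*$ a rejecting driver selects the still-available route minimizing its own expected travel time (Equation~(\ref{Eqn: Driver Decision})), the least favorable run ends with the driver on $r_i^*$; here I would invoke the elimination rule $\mathcal{R}_{m+1}=\mathcal{R}-r_i$ together with the decision rule to confirm $r_i^*$ is indeed the terminal choice (it is either never eliminated or, being the driver's expected-travel-time minimizer, accepted when recommended). To bound $\psi_S(\mathcal{G}\mid r_i^*)$, observe that routing the single focal driver along $r_i^*$ raises the flow by one unit on each edge of $\mathcal{E}_{r_i^*}$ and leaves every other edge flow at its ambient value, so
\begin{equation*}
\psi_S(\mathcal{G}\mid r_i^*) \;=\; \sum_{e\in\mathcal{E}_{r_i^*}} t_e(f_e+1) \;+\; \sum_{e\in\mathcal{E}\setminus\mathcal{E}_{r_i^*}} t_e(f_e).
\end{equation*}
Since the BPR cost $t_e(f)=t_e^{ff}\bigl[1+\lambda(f/c_e)^\beta\bigr]$ is nondecreasing in $f$ for $\lambda,\beta>0$, we have $t_e(f_e)\le t_e(f_e+1)$ for every $e$; replacing the summands of the second sum accordingly and recombining gives $\psi_S(\mathcal{G}\mid r_i^*)\le\sum_{e\in\mathcal{E}} t_e(f_e+1)$.

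\emph{Main obstacle.} The delicate point is the worst-case claim: ruling out that ROSTER's route elimination or the driver's trust-blended comparison could push the driver onto a route whose congestion strictly exceeds $\psi_S(\mathcal{G}\mid r_i^*)$. I would address this by chaining Equation~(\ref{Eqn: Driver Decision}) across stages --- a recommendation is rejected only when its trust-weighted travel-time estimate exceeds the driver's average counterfactual continuation cost $\mu_{m,D}$, which is itself assembled from travel times of routes still available --- so that the driver's terminal route is never worse, for the driver's own metric, than $r_i^*$, and, the routes being edge-disjoint, this in turn controls the congestion term. If the intended reading is instead that the two quantities are merely the extreme values the ROSTER congestion can take, rather than a guarantee holding on every run, then this obstacle dissolves and the statement follows immediately from the definition of $r_i^\circledast$ and the monotonicity of the BPR function.
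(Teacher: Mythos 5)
Your proof takes essentially the same approach as the paper: identify the driver's selfish route $r_i^*$ as the worst case and the congestion-minimizing route $r_i^\circledast$ as the best case, write each congestion as the per-edge sum with the focal driver's unit of flow added on the traversed edges, and conclude via the definition of the minimizer. You are in fact more explicit than the paper on two points --- you invoke the monotonicity of the BPR function to justify the rightmost inequality (which the paper leaves implicit), and the ``obstacle'' you flag is real: neither your sketch nor the paper's proof actually establishes that the realized ROSTER congestion cannot exceed $\psi_S(\mathcal{G}\mid r_i^*)$ once the elimination rule $\mathcal{R}_{m+1}=\mathcal{R}-r_i$ can remove $r_i^*$ from the driver's available set, a claim the paper simply asserts in its closing sentence; under your alternative reading (the chain merely orders the three stated quantities), both arguments are complete.
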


\begin{proof}
In the worst-case routing scenario, the driver rejects all recommendations and selects $r_i^*$, leading to a maximum congestion: 
\begin{equation*}
    \psi_S(\mathcal{G} | r_i^*) = \sum_{e \in \mathcal{E}_{r_i^*}} t_e(f_e + 1) + \sum_{e \notin \mathcal{E}_{r_i^*}} t_e(f_e).
\end{equation*}
In the best-case scenario, the driver follows the system's optimal recommendation $r_i^\circledast$, leading to a minimum congestion:
\begin{equation*}
    \psi_S(\mathcal{G}| r_i^\circledast) = \min_{r \in \mathcal{R}} \sum_{e \in \mathcal{E}_r} t_e(f_e + 1) + \sum_{e \notin \mathcal{E}_r} t_e(f_e).
\end{equation*}
Thus, the network congestion falls between the worst-case and best-case congestion.
\end{proof}

\subsection{Trust Dynamics \label{Sect: Trust Dynamics}}
\subsubsection{Driver Trust Updating}
Since humans are known to deviate from Bayesian updating \cite{ortoleva2022alternatives} but have been shown to follow a recency-based weighted average update model \cite{sutton2018reinforcement, strickland2024humans} when updating their trust values, the driver is assumed to update its trust in the system as follows:
\vspace{-0.1in}
\begin{equation}
\label{Eqn: Intermediate Trust Update}
\alpha_{m} = 
\begin{cases}
\alpha_{m-1} \ \text{if} \ \hat{B}^m_D(a_D^t) = 0,
\\
(1 - \eta_{D,m})^{m} \alpha_0 + \displaystyle \sum_{i = 0}^{m- 1}\eta_D(1 - \eta_{D,m})^i C_i \ \text{otherwise,}
\end{cases}
\end{equation}
where $C_i = 1$ if the driver accepts recommendation at stage $i$, and $0$ otherwise. Here, $\eta_{D,m}$ is the driver's adaptive learning rate at stage $m$, which is influenced by the change in the driver's discomfort, and is computed as follows
\vspace{-0.09in}
\begin{equation*}
\label{Eqn: Driver Discomfort}
\eta_{D,m} = \varepsilon_D \cdot \nabla B_D^m(a_D^t),
\vspace{-0.08in}
\end{equation*}
where $\varepsilon_D$ is a constant used to scale the gradient of the regret incurred by the driver. At intermediate interaction stages (i.e., $m \not = m^*$), the driver's regret is given by
\begin{equation*}
B_D^m(a_D^t) = \mathbb{E}_p(T(r_i)) - \mu_{m,D},
\label{Eqn: driver Counterfactual Regret}
\end{equation*}
where $r_i$ denotes the route recommended by the system. On the other hand, when $m = m^*$, the driver's regret is based on true observed outcomes and computed as
\begin{equation*}
B_D^m(a_D^t) = 
\begin{cases}
T(r_i) - \mu_{m,D}, & \text{if}\ a_D^t = r_i\\
\mathbb{E}_{q^\prime}(T(r_i)) - T(r_i) & \text{otherwise}.
\end{cases}
\label{Eqn: driver True Regret}
\end{equation*}

\begin{figure}[t]
\centering
\includegraphics[width=0.47\textwidth]{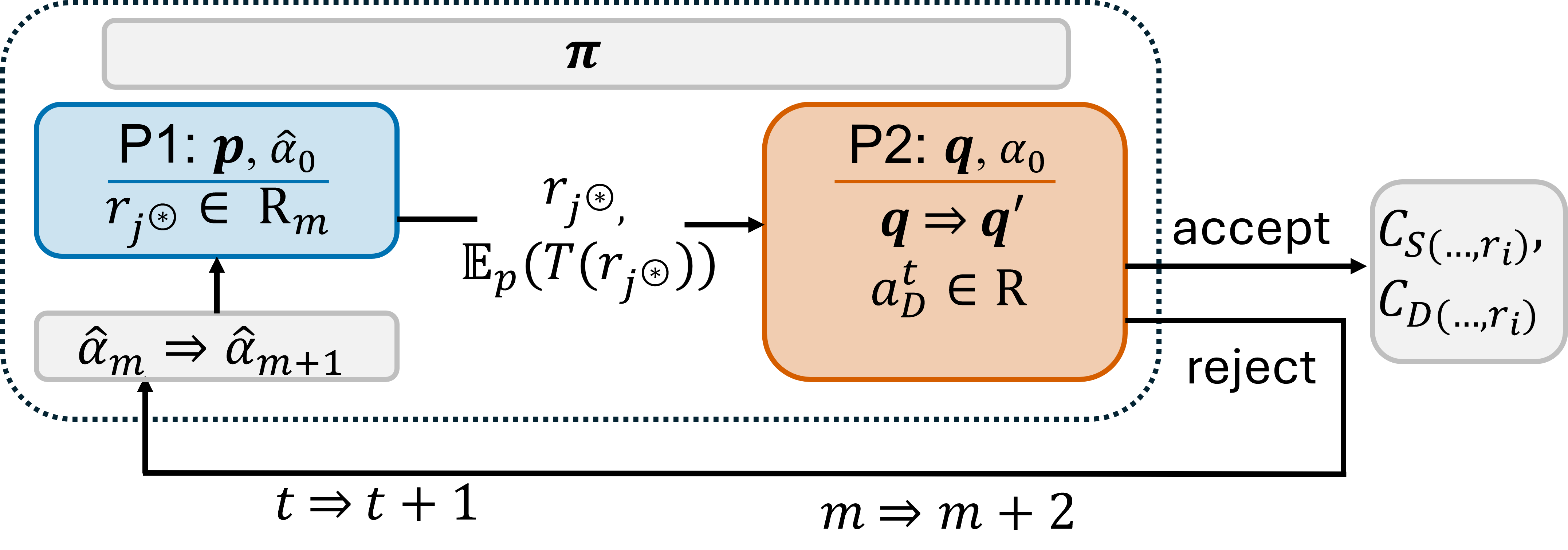}
\vspace{-0.1in}
\caption{Information exchange and decision-making within one driver-system interaction (as in Fig. \ref{Fig: Negotiation Game 3 Stage}, player P1 is the system, and player P2 the driver)}
\vspace{-0.1in}
\label{Fig:Interaction Diagram}
\vspace{-0.1in}
\end{figure}

\subsubsection{System Trust Prediction Updating}
The system is assumed to update its trust prediction using a basic regret minimization approach to update its trust prediction as
\begin{equation*}
\hat{\alpha}_m = 
\begin{cases}
    \hat{\alpha}_{m-1} + \eta_{S,m}  & \text{if}\ \hat{B}^s_D(a_D^t) < 0\\
    \hat{\alpha}_{m-1} - \eta_{S,m} & \text{if}\ \hat{B}^s_D(a_D^t) > 0 \\
    \hat{\alpha}_{m-1} + \eta_{S,0} & \text{otherwise},
\end{cases}
\label{Eqn: system Trust Prediction Update}
\end{equation*}
where $\eta_{S,0}$ is the default learning rate of the driver, and $\eta_{S,m}$ is computed as
\begin{equation*}
\eta_{S,m} = \varepsilon_S \cdot \hat{B}^s_D(a_D^t),
\label{Eqn: driver Learning Rate}
\end{equation*}
where $\hat{B}^s_D(a_D^t)$ is the driver's prediction of the system's true regret at the current stage.
It is given by
\begin{equation*}
\hat{B}^s_D(a_D^t) = 
\begin{cases}
    T(r_{a_D^t}) - \hat{\mu}_{m,D} & \text{if} \ a_D^t = r_i \\
    \hat{\mathbb{E}}_{q^\prime}(T(r_i)) - T(r_{a_D^t}) & \text{if} \ a_D^t \not = r_i \ \text{and} \ m=m^*\\
    \mathbb{E}_p(T(r_i)) - \hat{\mu}_{m,D} & \text{otherwise}.
\end{cases}
\label{Eqn: system Predicted driver Regret}
\end{equation*}
Here, $\hat{\mathbb{E}}_{q^\prime}(T(r_i))$ denotes the system's prediction of the driver's updated expected travel time of recommended route $r_i$, which is calculated similar to $\mathbb{E}_{q^\prime}(T(r_i))$ in Equation (\ref{Eqn: Driver Decision}) by substituting $\alpha$ with the predicted trust value $\hat{\alpha}$. An high-level overview of the interaction between the system and driver at each interaction is shown in Figure \ref{Fig:Interaction Diagram}.

\vspace{-0.05in}
\section{Experimental Evaluation \label{Sect: Experimental Evaluation}}

\begin{table*}[ht]
    \centering
    \caption{Average network congestion and driver travel time (in parentheses) in hours from route recommendation and selection algorithms for Manhattan Pigou network. Lowest values (excluding FC baseline) are in bold.}
    \label{Table: Avg Congestion Manhattan Pigou}
    \vspace{-0.1in}
\begin{tabular}{l|c|ccc|ccc}
\toprule
& FC & ROSTER & TASR & LLF & SR & AR \\
\midrule
0.25 & 0.66561664 (0.277) & \textbf{0.66568944} (\textbf{0.323}) & 0.66568961 (0.324) & 0.66569026 (0.324) & 0.66568947 (0.324) & 0.66573550 (0.389) \\

0.5  & 0.65562692 (0.261) & \textbf{0.65569262} (\textbf{0.301}) & 0.65569456 (0.309) & 0.65569733 (0.311) & 0.65569756 (0.310) & 0.65574811 (0.395) \\

0.75  & 0.64992302 (0.263) & \textbf{0.64997881} (\textbf{0.294}) & 0.64997917 (0.303) & 0.64998696 (0.308) & 0.64998765 (0.305) & 0.65003961 (0.388) \\

1.0  & 0.66240345 (0.266) & \textbf{0.66247416} (\textbf{0.308}) & 0.66247715 (0.326) & 0.66248700 (0.332) & 0.66248498 (0.329) & 0.66253202 (0.397)   \\
\bottomrule
\end{tabular}
\vspace{-0.1in}
\end{table*}

\subsection{Performance Metrics}
For single-stage and multi-stage settings, 
the performance of ROSTER was compared to two route recommendation strategies: Largest-Latency First (LLF) \cite{roughgarden2001stackelberg} and Trust-Aware Stackelberg Routing (TASR) \cite{Brown2024TASR}. LLF prioritizes routing compliant travelers along paths with the highest latency first and allows remaining travelers to choose their routes selfishly in response. 
TASR assumes that the travelers exhibit probabilistic compliance and routes travelers with higher compliance probability along routes with lower latency in response to the noncompliant travelers' flows. A degenerate case of TASR is implemented with one partially compliant agent, which reduces to recommending shortest-time routes and is similar to the recommendation strategy employed by many modern navigation systems. 
The ROSTER, LLF, and TASR algorithms assume the driver's route selection strategy follows the sampling approach (see Section \ref{Sect: Driver BR}). 

Three baseline route selection strategies were considered at the driver: Selfish Routing (labeled \emph{SR}), which assumes that the drivers selfishly choose the route that minimizes travel costs without input from the system; Full Compliance (labeled \emph{FC}), which assumes that the drivers choose the recommended route without considering their prior belief; and Always Reject (labeled \emph{AR}), which assumes drivers are fully noncompliant and always reject the recommendation. SR and FC result in interactions that terminate after the first stage, and FC and AR respectively act as lower upper bounds on the congestion from ROSTER (see Theorem \ref{Lemma: Roster Congestion Bounds}). The performance of each strategy is measured in terms of average total network congestion at the system, average travel time of the driver, execution time, and the ratio of the average congestion to that of FC, referred to as congestion ratio, and the average travel time to that of SR, referred to as travel time ratio. 
The system's prediction of the driver's trust, given by $(\hat{\alpha}_{m^*} - \alpha_{m^*})^2$, is also evaluated to show convergence through repeated, dependent interactions.

\subsection{Simulation Experiments}
Given the significance of Pigou's network \cite{roughgarden2002bad} in routing games, we considered two routing scenarios as follows. In the first scenario, a Pigou network was simulated using traffic data from Manhattan (New York City, USA), with the two routes corresponding to Franklin D. Roosevelt East River (FDR) Drive 
and 2nd Avenue. For each route, speed limits, capacities, and lengths were respectively chosen as 65 and 40 kilometers per hour, 4,000 and 2,000 drivers per hour, and 16 and 8 kilometers. Since the network in the first scenario consists of only two alternative routes, only a single-stage game is considered for each interaction. 

In the second scenario, the Sioux Falls (North Dakota, USA) network was simulated, with edges mapped to modern roadways and capacities standardized to 1,000 and 1,900 vehicles per lane per hour for urban roads and highways, respectively, as in \cite{chakirov2014enriched}. 
In this scenario, the driver was assumed to have an origin of node 10 and destination of node 20, allowing for four alternative routes. In both scenarios, edge volumes were assumed to be randomly uniform values between $0$ (no congestion) and twice the capacity (very congested). The number of available routes in each scenario was chosen to represent common route recommendation applications that typically provide fewer than five routes.

For each experiment, the following gradient scaling constants, learning rates, and discomfort values at the driver and system, respectively, were used: $\varepsilon_D = 0.0002$ and $\varepsilon_S = 0.00015$,
$\eta_{0,D} = 0.0025$ and $\eta_{0,S} = 0.0025$, and $\gamma_D = 1.125$ and $\gamma_S = 1.125$. The number of outcomes to be sampled by the system and the driver were respectively chosen as $g_{S,m} = 5$ and $g_{D,m} = 2$. Four distinct trust values at the driver were considered: (i) $\alpha_0 = 0.25$, (ii) $\alpha_0 = 0.5$, (iii) $\alpha_0 = 0.75$, and (iv) $\alpha_0 = 1.0$. The system was assumed to have a neutral starting driver trust prediction of $\hat{\alpha} = 0.5$, and the system's prior belief $\boldsymbol{p}$ is aligned with the true state of the network $\boldsymbol{\pi}$ in all simulated scenarios. 
Both experiments were implemented in Python 3.12.6. Experimental results are presented in the next section.

\vspace{-0.05in}
\subsection{Performance Results}  \label{Sect: Results}
The average network congestion and driver travel times for the first routing scenario are displayed in Table \ref{Table: Avg Congestion Manhattan Pigou}. For even a single driver in a single-stage interaction, the average network congestion and driver travel time from ROSTER is lower than that of each competing route recommendation algorithm and route selection strategy. The ROSTER algorithm performs better 
than the next best algorithm, TASR, improving further as the driver's trust in the system increases. 
Simulation results detailing the average network congestion ratios and average driver travel time ratios of all strategies for the second scenario are shown in Figures \ref{Fig: Line Plot System Congestion Ratios} and \ref{Fig: Line Plot Driver Travel Time Ratios}, respectively. While ROSTER offers a clear advantage in terms of system congestion in one-, two-, and three-stage interactions, the resulting network congestion from ROSTER is closest to that of FC in three-stage interactions. 
The performance of ROSTER improves as the driver trust increases, indicating that the Monte Carlo-based sampling approach offers an advantage compared to the alternative route recommendation strategies that do not consider the driver's long-term decision-making. While the execution time of ROSTER, shown in Figure \ref{Img: SF Execution Times}, is greater than LLF or TASR, the maximum execution time in the second routing scenario was less than 0.6 milliseconds for a three-stage interaction, which is not computationally expensive.
ROSTER's average congestion values in Table \ref{Table: Avg Congestion Manhattan Pigou} and the average congestion ratios in Figure \ref{Fig: Line Plot System Congestion Ratios} are bounded above and below by the case in which the driver is fully compliant (FC) and fully noncompliant (AR), respectively, providing validation for the theoretical results provided by Theorem \ref{Lemma: Roster Congestion Bounds}. 

Figure \ref{Fig: Trust Prediction Error Comparison} plots the convergence of prediction error in the system's prediction of the driver's trust for the ROSTER algorithm in the second routing scenario under various starting values of $\alpha$ in the interactions up to three stages for $\varepsilon_D = 0.2$ and $\varepsilon_S = 0.15$. 
Although the proposed system trust prediction updating method may initially lead to an increase in the prediction error, it quickly begins to converge to zero in all cases after about forty interactions with the driver.

\begin{figure}
\centering
\vspace{-0.05in}
\includegraphics[width=0.47\textwidth]{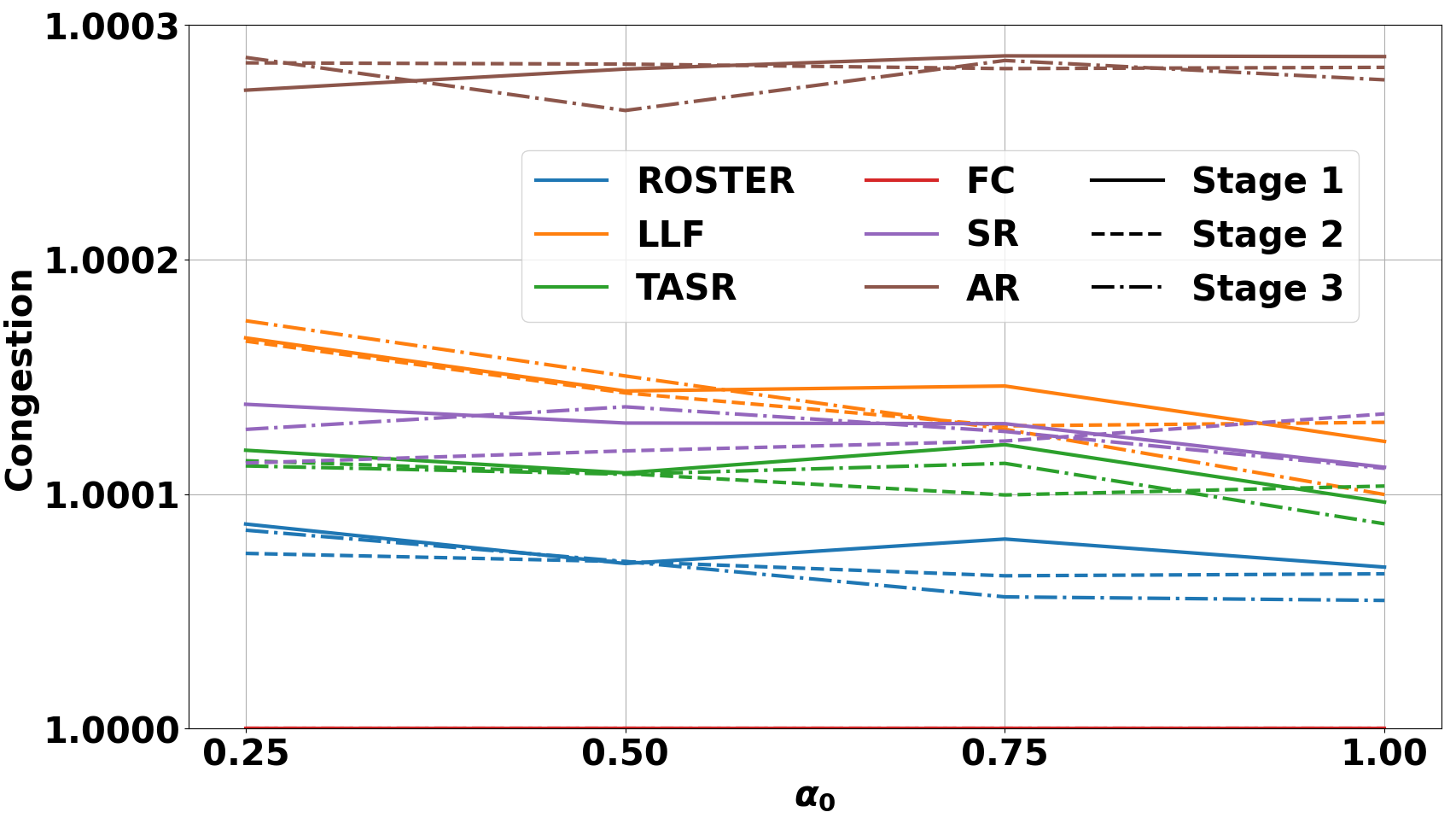}
\vspace{-0.1in}
\caption{Average congestion ratios of all strategies for 1-stage, 2-stage, and 3-stage interactions in Sioux Falls network}
\label{Fig: Line Plot System Congestion Ratios}
\end{figure}

\begin{figure}
\centering
    \vspace{-0.1in}
    \includegraphics[width=0.47\textwidth]{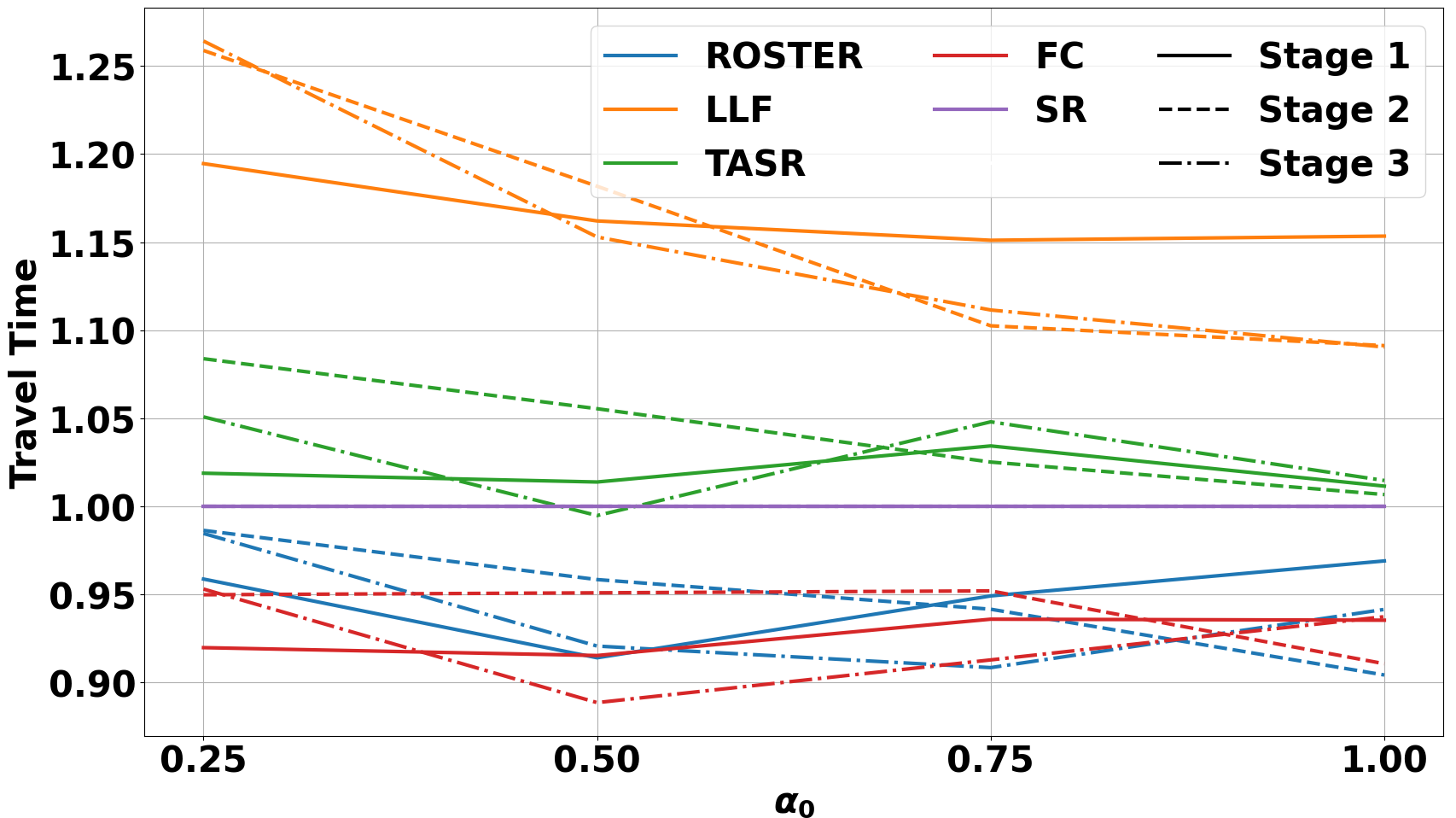}
    \vspace{-0.1in}
\caption{Average travel time ratios of all strategies for 1-stage, 2-stage, and 3-stage interactions in Sioux Falls network}
\label{Fig: Line Plot Driver Travel Time Ratios}
\vspace{-0.2in}
\end{figure}

\begin{figure}
\centering
\vspace{-0.1in}
\includegraphics[width=0.47\textwidth]{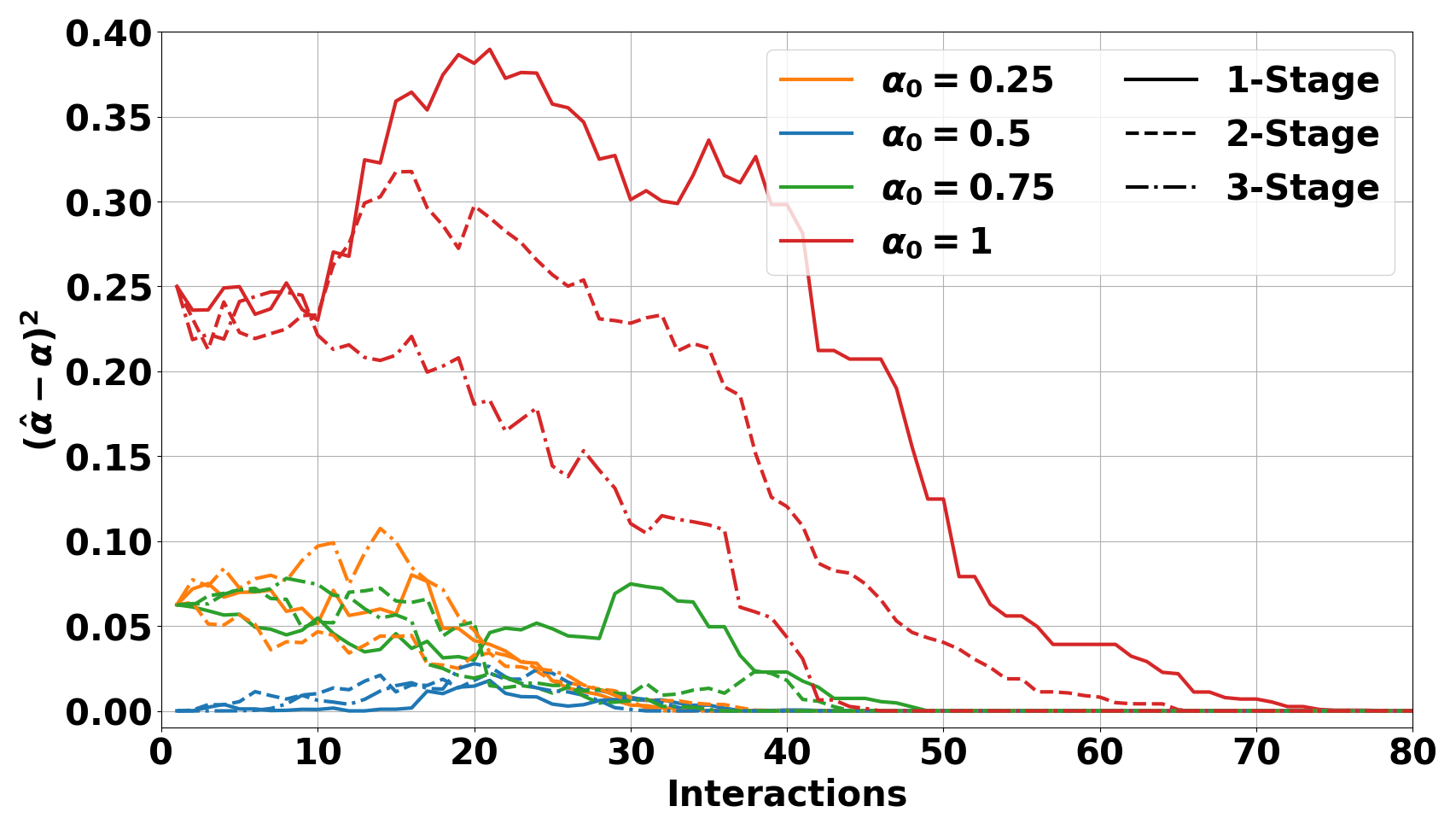}
\vspace{-0.1in}
\caption{Trust prediction error of ROSTER for 1-stage, 2-stage, and 3-stage interactions in Sioux Falls network}
\label{Fig: Trust Prediction Error Comparison}
\end{figure}

\section{Conclusion
\label{Sect: Conclusion}}
This paper presented a novel approximate algorithm, called ROSTER, to compute the Bayesian Nash equilibrium based on Monte Carlo sampling with trust belief updating to determine best response route recommendations sent from the system to the driver. The interaction between the system and driver was modeled as a multi-stage routing game, under the assumption that the system and driver are a network congestion minimizer and travel-time minimizer, respectively. The performance of ROSTER was demonstrated and compared to that of route recommendation strategies TASR and LLF, as well as driver route selection strategies for selfish routing, full compliance, and full noncompliance. Simulation experiments were conducted for traffic networks modeled with Manhattan and Sioux Falls (USA) traffic data. The performance of each strategy was compared in terms of average network congestion, average driver travel time, and convergence of the system's prediction of the driver's trust for interactions with a maximum of three interaction stages. Under the ROSTER algorithm, the system's prediction of the driver's trust was shown to converge to the driver's true trust value across a reasonable number of repeated, dependent interactions. ROSTER was shown to be more effective in mitigating network congestion and driver travel time compared to each alternative strategy, particularly after multiple stages of interaction. 
While this work 
focused on how a single driver’s routing choice affects congestion, future work will explore multi-driver interactions with the system, dynamic traffic flows, and strategic information design to assess when dishonesty benefits congestion mitigation. 

\begin{figure}[!t]
\centering
\vspace{-0.1in}
\includegraphics[width=0.47\textwidth]{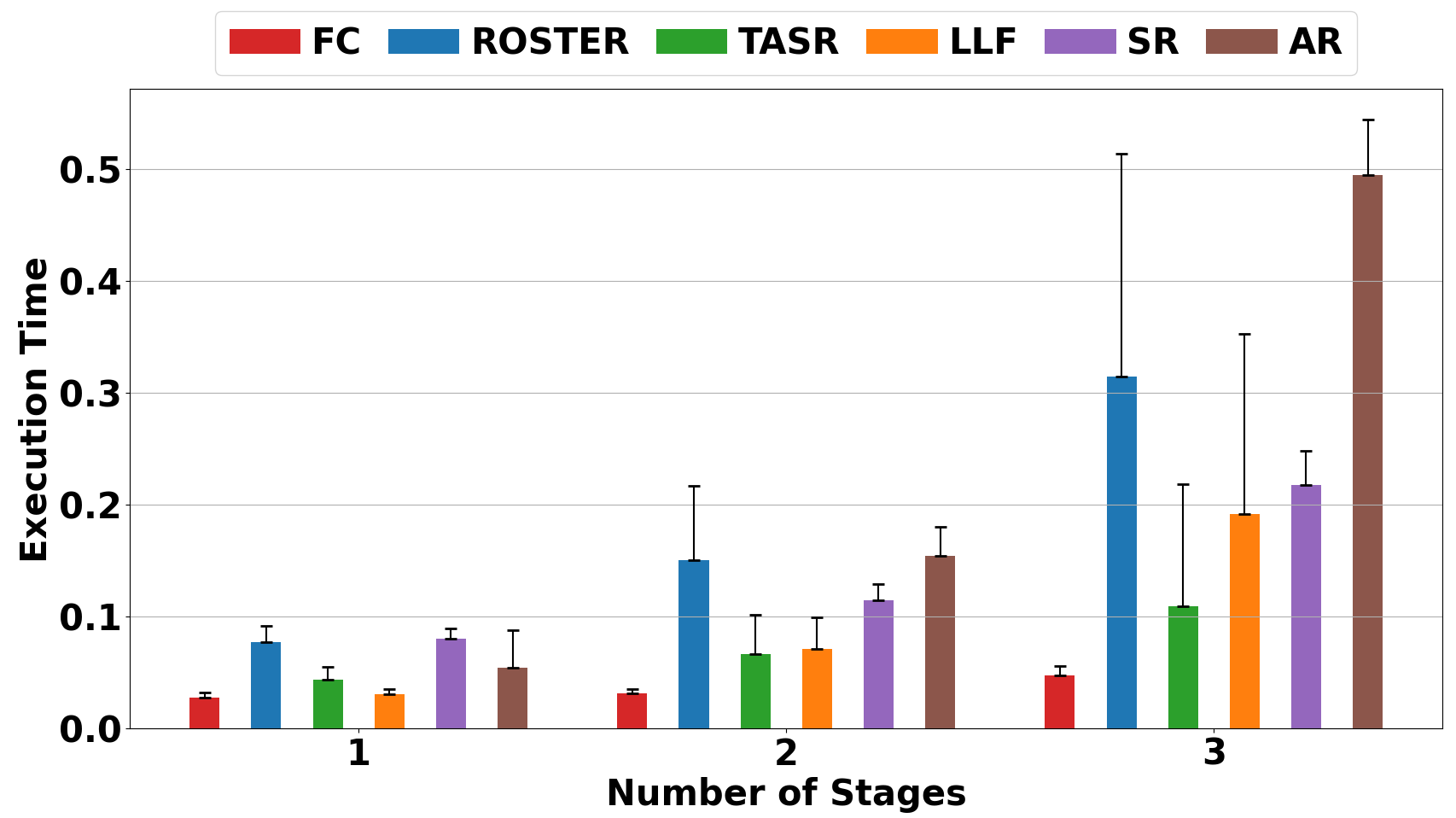}
\vspace{-0.1in}
\caption{Average execution time (milliseconds) and standard deviation for 1-stage, 2-stage, and 3-stage interactions in Sioux Falls network with $\alpha = 0.25$}
\label{Img: SF Execution Times}
\vspace{-0.2in}
\end{figure}

\bibliographystyle{plain}
\bibliography{references-bargaining}

\end{document}